\documentclass[journal,10pt]{IEEEtran}

\usepackage{amssymb}
\usepackage{amsmath,bm}
\usepackage{cite}
\usepackage{url}
\usepackage{xcolor}
\usepackage{cite,graphicx,amsmath,amssymb}
\usepackage{subfigure}

\usepackage{fancyhdr}
\usepackage{mdwmath}
\usepackage{mdwtab}
\usepackage{amsthm}
\usepackage{algorithm}
\usepackage{algpseudocode}
\usepackage{siunitx}

\newtheorem{lem}{Lemma}


\hyphenation{op-tical net-works semi-conduc-tor}

\makeatletter
\def\ScaleIfNeeded{%
\ifdim\Gin@nat@width>\linewidth \linewidth \else \Gin@nat@width
\fi } \makeatother

\begin{document}

\title{Delay Minimization in Pinching-Antenna-enabled NOMA-MEC Networks}


\author{Yuan~Ai, Xidong Mu, \emph{Member,~IEEE}, Pengbo~Si, \emph{Senior Member,~IEEE}, Yuanwei~Liu, \emph{Fellow,~IEEE}

\thanks{Y. Ai, P. Si are with the School of Information Science and Technology, Beijing University of Technology, Beijing 100124, China (e-mail: aiyuan@bjut.edu.cn; sipengbo@bjut.edu.cn).	\emph{(Corresponding author: Pengbo Si.)}

X. Mu is with the Centre for Wireless Innovation (CWI), Queen’s University Belfast, Belfast, BT3 9DT, U.K. (e-mail: x.mu@qub.ac.uk).

Y. Liu is with the Department of Electrical and Electronic Engineering, The
University of Hong Kong, Hong Kong (e-mail: yuanwei@hku.hk).
} 
}

\maketitle
\begin{abstract}
This letter proposes a novel pinching antenna systems (PASS) enabled non-orthogonal multiple access (NOMA) multi-access edge computing (MEC) framework. An optimization problem is formulated to minimize the maximum task delay by optimizing offloading ratios, transmit powers, and pinching antenna (PA) positions, subject to constraints on maximum transmit power, user energy budgets, and minimum PA separation to mitigate coupling effects. To address the non-convex problem, a bisection search-based alternating optimization (AO) algorithm is developed, where each subproblem is iteratively solved for a given task delay. Numerical simulations demonstrate that the proposed framework significantly reduces the task delay compared to benchmark schemes.
\end{abstract}

\begin{IEEEkeywords}
Multi-access edge computing, non-orthogonal multiple access, pinching antenna.
\end{IEEEkeywords}
\vspace{-0.5cm} 
\section{Introduction}
Conventional flexible antenna technologies have shown promise in enhancing wireless performance but are hindered by limitations like double-fading effects or restricted operational range\cite{liu2025pinching}. To address these challenges, pinching-antenna systems (PASS) have emerged as a transformative technology, leveraging dielectric waveguides to enable dynamic positioning of pinching antennas (PAs)~\cite{ding2025flexible}. PASS utilize dielectric waveguides to confine electromagnetic waves, significantly reducing propagation losses over extended distances while enabling adaptable line-of-sight connectivity~\cite{ouyang2025array}. {By selectively exciting radiation points along waveguides, PASS enables reconfigurable propagation for programmable wireless environments, particularly through two-state architectures where the system is reconfigured via activation state control~\cite{tyrovolas2025ergodic}.


{Recent research has explored the integration of PASS with multi-access edge computing (MEC) to enhance offloading efficiency. For instance, Liu \emph{et al.}~\cite{liu2025wireless} investigated the optimization of discrete PA activations within PASS, achieving enhanced energy transfer efficiency and improved task offloading rates. Moreover, Tegos \emph{et al.}~\cite{tegos2025minimum} focused on uplink scenarios, proposing a joint optimization of PA positions and resource allocation to maximize data rates. However, these existing studies either assume orthogonal access or focus on system-wide average performance, overlooking the bottleneck users in the network. For delay-sensitive MEC applications, the system performance is strictly limited by the user with the maximum task completion time. In conventional fixed-position antenna systems, these bottleneck users suffer from static deep fading. The potential of PASS to proactively reconfigure the physical propagation environment to alleviate such worst-case delays has not been investigated. Furthermore, in an uplink NOMA scenario, the movement of PAs creates a dynamic interference environment, requiring a joint optimization of physical antenna placement and upper-layer task offloading that goes beyond simple combination strategies.}

To address this research gap, this paper proposes a novel uplink PASS enabled NOMA-MEC framework. {Typical application scenarios include Industrial Internet of Things (IIoT) in smart factories, where PASS combats severe multipath fading to ensure reliable robot control, and multi-user extended reality, where the proposed framework minimizes the worst-case rendering delay to maintain synchronization among users.} {Unlike prior works that focus on system-wide average performance, our primary objective is to minimize the maximum task delay across all users, thereby ensuring strict latency requirements are met for the entire network.} We formulate a joint optimization problem by optimizing offloading ratios, transmit powers, and PA positions, subject to constraints on transmit power, energy budgets, and minimum PA separation. {By leveraging the mobility of PAs along the dielectric waveguide, the proposed framework allows the BS to actively improve the channel conditions of delay-constrained users.} To tackle the non-convexity of the formulated problem, we propose a bisection search-based alternating optimization (AO) algorithm. Extensive simulations demonstrate that the proposed framework effectively reduces the task delay compared to several baselines.
\vspace{-0.4cm} 
\section{System Model and Problem Formulation}
\subsection{System Model}
\vspace{-0.5cm} 
\begin{figure} [!htb]
\centering
\includegraphics[width=0.5\columnwidth]{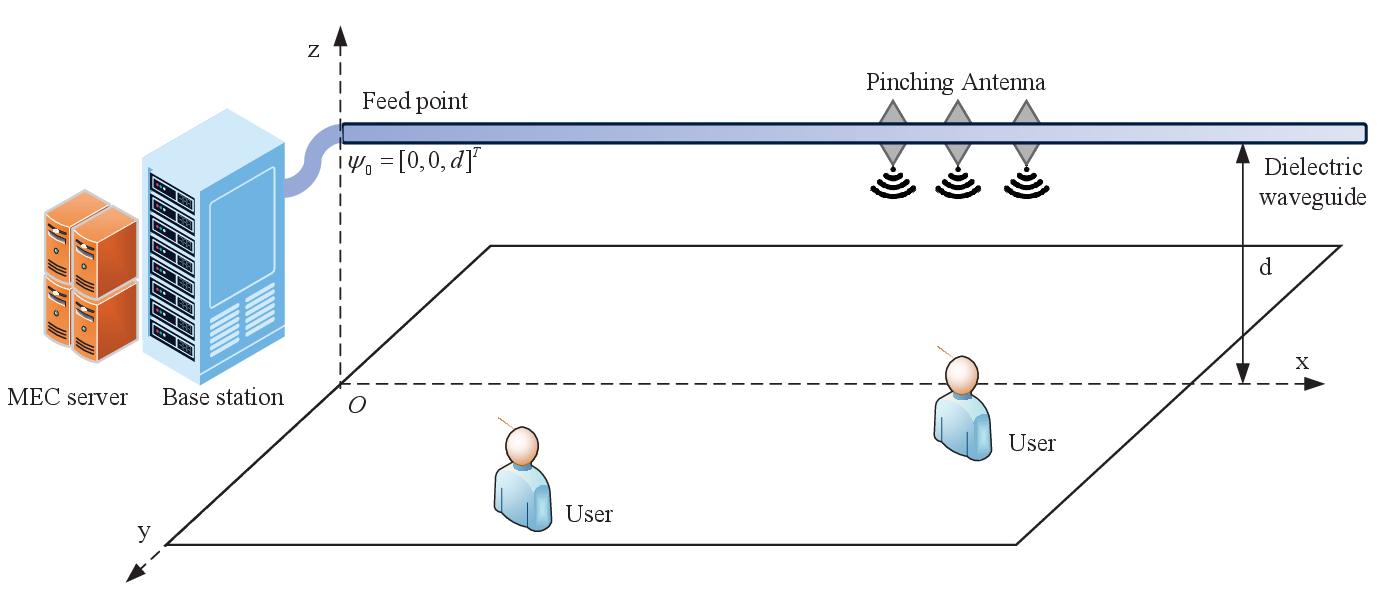}
 \caption{Illustration of a PASS-enabled NOMA-MEC framework with single-waveguide.}
 \label{system_model}
\end{figure}
\vspace{-0.4cm} 
Consider an uplink NOMA-MEC network comprising \( K \) single-antenna users, indexed by the set \( \mathcal{K} = \{1, 2, \ldots, K\} \), and a BS equipped with a PASS connected to an MEC server, as shown in Fig. \ref{system_model}. The PASS uses a dielectric waveguide of length $L$ meters, having $N$ PAs at positions $\psi_{\mathrm{p}}^n = [x_{\mathrm{p}}^n, 0, d]$, with $\mathbf{x}_{\mathrm{p}} = [x_{\mathrm{p}}^1, \ldots, x_{\mathrm{p}}^N]^T$, constrained by $\mathcal{F} = \{ \mathbf{x}_{\mathrm{p}} \mid 0 \leq x_{\mathrm{p}}^n \leq L, x_{\mathrm{p}}^{n+1} - x_{\mathrm{p}}^n \geq \Delta, \forall n \}$. Users are randomly placed in a square region of side $L$ meters, with positions $\psi_{\mathrm{u}}^k = [x_{\mathrm{u}}^k, y_{\mathrm{u}}^k, 0]$. The channel vector between the $k$-th user and the $N$ PAs is expressed as
\begin{equation}
\mathbf{h}_{k}\left(\mathrm{\mathbf{x}}_{\mathrm{p}}\right)
= {\left[\frac{\eta e^{-j \frac{2 \pi}{\lambda}\left\| \psi_{\mathrm{u}}^k - \psi_{\mathrm{p}}^1\right\|}}{\left\|\psi_{\mathrm{u}}^k - \psi_{\mathrm{p}}^1\right\|}, \ldots, \frac{\eta e^{-j \frac{2 \pi}{\lambda}\left\| \psi_{\mathrm{u}}^k - \psi_{\mathrm{p}}^N\right\|}}{\left\|\psi_{\mathrm{u}}^k - \psi_{\mathrm{p}}^N\right\|} \right]^T},
\end{equation}
where $\eta = \frac{c}{4 \pi f_c}$ is a constant, with $c$ representing the speed of light and $f_c$ the carrier frequency, while $\lambda = \frac{c}{f_c}$ is the wavelength of the carrier signal \cite{ouyang2025array}, and distance is $\|\psi_{\mathrm{u}}^k - \psi_{\mathrm{p}}^n\| = \sqrt{(x_{\mathrm{u}}^k - x_{\mathrm{p}}^n)^2 + D_k^2}$, with $D_k = \sqrt{(y_{\mathrm{u}}^k)^2 + d^2}$. The in-waveguide channel vector is
\begin{equation}
\mathbf{g}(\mathbf{x}_{\mathrm{p}}) = \left[ e^{-j \frac{2 \pi}{\lambda_{\mathrm{g}}} \left\| \psi_{\mathrm{p}}^1 - \psi_0 \right\|}, \ldots, e^{-j \frac{2 \pi}{\lambda_{\mathrm{g}}} \left\| \psi_{\mathrm{p}}^N - \psi_0 \right\|} \right]^T,
\end{equation}
where $\lambda_{\mathrm{g}} = \frac{\lambda}{n_{\mathrm{eff}}}$ is the guided wavelength within the waveguide, and $n_{\mathrm{eff}}$ is the effective refractive index\footnote{{In practical implementations, signals propagating through the waveguide experience attenuation. However, as discussed in \cite{ding2025flexible}, the propagation loss in high-purity dielectric waveguides is significantly smaller than the free-space path loss. Given the short length of the waveguide, the in-waveguide loss is negligible. Thus, following the approach in \cite{ding2025flexible}, we omit this loss term, and the derived results serve as an upper bound on the system performance.}}. {It is worth noting that the uplink signal model relies on the electromagnetic reciprocity of the PASS architecture. Since the dielectric waveguide is a passive linear component, the in-waveguide propagation channel is reciprocal to its downlink counterpart \cite{tegos2025minimum}.} The composite received signal at the BS is formulated as
\begin{equation}
y = \sum_{k=1}^{K}  \mathbf{h}_{k}^T(\mathbf{x}_{\mathrm{p}}) \mathbf{g}(\mathbf{x}_{\mathrm{p}})  \sqrt{\frac{P_k}{N}} s_k + n,
\end{equation}
where $P_k$ is the transmit power of user $k$, $s_k$ is the transmitted symbol satisfying $\mathbb{E}[|s_k|^2] = 1$, and $n \sim \mathcal{CN}(0, \sigma^2)$ represents additive white Gaussian noise with variance $\sigma^2$. {The scaling factor $\sqrt{1/N}$ accounts for the power normalization of the passive combining process within the waveguide. Since the signals from the $N$ PAs are aggregated into a single RF chain, this factor effectively normalizes the receive combining vector to unit norm, ensuring energy conservation and a physically consistent signal-to-noise ratio calculation~\cite{tegos2025minimum}.}

Successive interference cancellation (SIC) is employed at the BS. The effective channel power gain for user \( k \) is defined as \( |v_k(\mathbf{x}_{\mathrm{p}})|^2 = |\mathbf{h}_{k}^T(\mathbf{x}_{\mathrm{p}}) \mathbf{g}(\mathbf{x}_{\mathrm{p}})|^2 \). The decoding order is determined by a permutation \( \pi : \mathcal{K} \to \{1, 2, \ldots, K\} \), where user \( k_m \) satisfies \( \pi(k_m) = m \) at decoding step \( m = K, K-1, \ldots, 1 \). The channel power gains are ordered as
\begin{equation}\label{eq:sic_order}
Q(\mathcal{K}) \stackrel{\triangle}{=}|v_{k_K}(\mathbf{x}_{\mathrm{p}})|^2 \geq |v_{k_{K-1}}(\mathbf{x}_{\mathrm{p}})|^2 \geq ... \geq |v_{k_1}(\mathbf{x}_{\mathrm{p}})|^2.
\end{equation}

For a given decoding order \( \pi \), the BS decodes users from step \(K\) to \(1\), decoding user \(k_m\) at step \(m\), treating signals from users \(j\) with \(\pi(j) < \pi(k_m)\) as interference. The signal-to-interference-plus-noise ratio (SINR) for user \(k\) is
\begin{equation}
    \text{SINR}_k = \frac{P_k |v_k(\mathbf{x}_{\mathrm{p}})|^2}{\sum_{j \in \mathcal{K} : \pi(j) < \pi(k)} P_j |v_j(\mathbf{x}_{\mathrm{p}})|^2 + N \sigma^2},
\end{equation}
where effective channel power gain is explicitly given by
\begin{equation}
|v_k(\mathbf{x}_{\mathrm{p}})|^2 = \left| \sum_{n=1}^N \frac{\eta e^{-j \left[ \frac{2 \pi}{\lambda} \left\| \psi_{\mathrm{u}}^k - \psi_{\mathrm{p}}^n \right\| + \frac{2 \pi}{\lambda_g} \left\| \psi_{\mathrm{p}}^n - \psi_0 \right\| \right]}}{\left\| \psi_{\mathrm{u}}^k - \psi_{\mathrm{p}}^n \right\|}  \right|^2.
\end{equation}

Let $B$ denote the system bandwidth in Hz. The achievable data rate for user $k$ is
\begin{equation}\label{R_m_n_off}
\begin{aligned}
R_{k}
=B\log_2\left(\frac{\sum_{i \in \mathcal{K} : \pi(i) \leq \pi(k) } P_i |v_i\left(\mathrm{\mathbf{x}}_{\mathrm{p}}\right)|^2 + N \sigma^2}{\sum_{j \in \mathcal{K} : \pi(j) < \pi(k)} P_j |v_j\left(\mathrm{\mathbf{x}}_{\mathrm{p}}\right)|^2 + N \sigma^2}\right).
\end{aligned}
\end{equation}


Users offload a fraction of computational task to the MEC server, balancing latency and energy via uplink NOMA transmission. Download time from the MEC server is assumed negligible. Offloading delay and energy consumption for user $k$ are
\begin{equation}\label{eq:T_k_off}
T_k^{\text{off}} = \frac{\beta_k L_k}{R_k}, \quad E_k^{\text{off}} = T_k^{\text{off}} P_k.
\end{equation}
where $\beta_k$ is the fraction of the task offloaded by user $k$, and $L_k$ is the total task size in bits. Local computation of the remaining $(1 - \beta_k) L_k$ bits, with CPU frequency $f_k^{\text{loc}}$, yields
\begin{equation}\label{eq:T_k_loc}
T_k^{\text{loc}} = \frac{(1 - \beta_k) L_k C_k}{f_k^{\text{loc}}}, \quad E_k^{\text{c}} = \kappa_k (1 - \beta_k) L_k C_k (f_k^{\text{loc}})^2,
\end{equation}
where $C_k$ is the number of CPU cycles required per bit for user $k$. {The power consumption of the CPU at user $k$ is modeled as $P_k^{\text{CPU}} = \kappa_k (f_k^{\text{loc}})^3$, where $\kappa_k$ is the effective capacitance coefficient for user $k$\cite{fang2020optimal}}.
\vspace{-0.5cm} 
\subsection{Problem Formulation}
To optimize the performance of the system, we aim to minimize the maximum task completion time across all users while adhering to practical constraints on power, energy, and PA placement. {Regarding placement constraints, practical PASS may utilize discrete locations. Recent studies on two-state PASS \cite{tyrovolas2025how} demonstrate that finite pinching points can achieve near-continuous performance. Thus, our continuous positioning model serves as a fundamental benchmark for practical PASS-enabled MEC design.} The task completion time for user \( k \) is determined by the maximum of the offloading delay and the local computation delay, expressed as
\begin{equation}\label{eq:T_k}
T_k = \max \left\{ T_k^{\text{off}}, T_k^{\text{loc}} \right\},
\end{equation}
where \( T_k^{\text{off}} \) and \( T_k^{\text{loc}} \) are defined in \eqref{eq:T_k_off} and \eqref{eq:T_k_loc}, respectively. There are \( K! \) possible decoding orders for the \( K \) users, as noted in \cite{ren2025pinching}. To address the optimization problem, we perform an exhaustive search over all possible decoding orders to identify the optimal permutation \( \pi \). For a given decoding order \( \pi \), the problem jointly optimizes the offloading ratios \( \bm{\beta} = [\beta_1, \ldots, \beta_K]^T \), transmit powers \( \mathbf{P} = [P_1, \ldots, P_K]^T \), and PA positions \( \mathbf{x}_{\mathrm{p}} = [x_{\mathrm{p}}^1, \ldots, x_{\mathrm{p}}^N]^T \). The optimization problem is formulated as
\begin{subequations}\label{eq:Prob_T_min}
\begin{align}
\min_{\bm{\beta}, \mathbf{P}, \mathbf{x}_{\mathrm{p}}} \quad & {\max_{k \in \mathcal{K}} T_k}\label{eq:obj} \\
\text{s.t.} \quad & \eqref{eq:sic_order}, \quad \forall k, \label{eq:sic_con} \\
& 0 \leq \beta_k \leq 1, \quad \forall k, \label{eq:beta_con} \\
& 0 \leq P_k \leq P_{\max}, \quad \forall k, \label{eq:p_con} \\
& 0 \leq x_{\mathrm{p}}^n \leq L, \quad \forall n, \label{eq:x_p_con} \\
& x_{\mathrm{p}}^{n+1} - x_{\mathrm{p}}^n \geq \Delta, \quad \forall n, \label{eq:delta_con} \\
& E_k^{\text{c}} + E_k^{\text{off}} \leq E_{\max}, \quad \forall k, \label{eq:E_con}
\end{align}
\end{subequations}
where \( P_{\max} \) and \( E_{\max} \) denote the maximum transmit power and energy budget per user, respectively, and \( \Delta \) ensures the minimum separation between PAs. Constraint \eqref{eq:sic_con} enforces the optimal decoding order, \eqref{eq:beta_con} ensures the offloading task ratio is within a valid range, \eqref{eq:p_con} limits the transmit power, \eqref{eq:x_p_con} and \eqref{eq:delta_con} define the feasible set of PA locations along the waveguide, and \eqref{eq:E_con} guarantees that the combined energy consumption for local computation and offloading does not exceed the energy budget for each user. In the following, we focus on a given user decoding order to develop the solution approach.
\vspace{-0.5cm} 
\section{Proposed Solution}
{To solve \eqref{eq:Prob_T_min}, we extend the time-equalization principle from conventional NOMA-MEC systems \cite{fang2020optimal} to our PASS-enabled framework. Although \cite{fang2020optimal} showed that optimal resource allocation yields equalized offloading times ($T_k^{\text{off}} = T_{k'}^{\text{off}}, \forall k, k'$) under static channels, achieving this in PASS-enabled framework is significantly more complex. Specifically, adjusting the flexible PA positions $\mathbf{x}_{\mathrm{p}}$ to equalize delays creates a dynamic coupling with the SIC decoding order, where position updates may reshuffle channel rankings and invalidate the current SIC sequence. This equalization property is proven by contradiction: if $T_k^{\text{off}} > T_{k'}^{\text{off}}$, adjusting $\beta_k, P_k,$ or $\mathbf{x}_{\mathrm{p}}$ can reduce the maximum completion time without violating constraints. Building on this principle, we handle the positional non-convexity via a multi-resolution search, while Lemma \ref{lem:equivalent_offload_time} derives the equivalent expression in \eqref{eq:T_k_off_equiv}. For analytical simplicity, let $\pi(k) = k$.}

\begin{lem} \label{lem:equivalent_offload_time}
In a multi-user NOMA-MEC PASS where all $K$ users offload tasks within a common transmission time $T = T_k^{\text{off}} = T_{k'}^{\text{off}}, \forall k \neq k'$, the offloading time for user $k$ can be equivalently expressed as
\begin{equation} \label{eq:T_k_off_equiv}
\tilde{T}_k^{\text{off}} = \frac{\sum_{i=1}^{k} \beta_i L_i}{B \log_2 \left( \frac{\sum_{i=1}^{k} P_i |v_i(\mathbf{x}_{\mathrm{p}})|^2 + N \sigma^2}{N \sigma^2} \right)}, \quad \forall k,
\end{equation}
\end{lem}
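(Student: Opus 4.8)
The plan is to turn the equalized-time hypothesis into a telescoping sum of the NOMA rates in \eqref{R_m_n_off}. First I would rewrite the definition \eqref{eq:T_k_off} of the offloading delay: because $T_i^{\text{off}} = T$ for the common transmission time $T$, the payload of each user obeys $\beta_i L_i = T\,R_i$ for all $i \in \{1,\dots,k\}$. Adding these $k$ identities gives $\sum_{i=1}^{k}\beta_i L_i = T\sum_{i=1}^{k} R_i$, so the lemma reduces to evaluating the partial sum $\sum_{i=1}^{k} R_i$ in closed form.

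Next, adopting the convention $\pi(k)=k$ and abbreviating $\Sigma_i \triangleq \sum_{j=1}^{i} P_j |v_j(\mathbf{x}_{\mathrm{p}})|^2$ with $\Sigma_0 \triangleq 0$, the rate \eqref{R_m_n_off} reads $R_i = B\log_2\!\bigl((\Sigma_i + N\sigma^2)/(\Sigma_{i-1}+N\sigma^2)\bigr)$, i.e.\ a difference of logarithms $B\bigl[\log_2(\Sigma_i + N\sigma^2) - \log_2(\Sigma_{i-1}+N\sigma^2)\bigr]$. Summing over $i=1,\dots,k$, the intermediate terms cancel pairwise, leaving $\sum_{i=1}^{k} R_i = B\log_2\!\bigl((\Sigma_k + N\sigma^2)/(N\sigma^2)\bigr)$, where I have used $\Sigma_0 = 0$. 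Substituting this into $\sum_{i=1}^{k}\beta_i L_i = T\sum_{i=1}^{k} R_i$ and solving for $T$ gives precisely the right-hand side of \eqref{eq:T_k_off_equiv}; since $T = T_k^{\text{off}}$ by hypothesis, this proves $T_k^{\text{off}} = \tilde{T}_k^{\text{off}}$ for every $k$.

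Since the argument collapses to a single telescoping identity, I do not anticipate a genuine obstacle. The only details warranting a sentence are the empty-sum convention $\Sigma_0 = 0$, which makes the innermost denominator the pure noise floor $N\sigma^2$ (consistent with the first-decoded user seeing no NOMA interference), and the well-posedness of \eqref{eq:T_k_off_equiv}: whenever at least one of $P_1,\dots,P_k$ is positive one has $\Sigma_k + N\sigma^2 > N\sigma^2$, so the denominator logarithm is strictly positive and $\tilde{T}_k^{\text{off}}$ is finite; in the degenerate all-zero-power case no bits are offloaded and the expression is read as the corresponding limit. One may also note that the identity is consistent across $k$, as its proof for a given $k$ only invokes time-equalization among users $1,\dots,k$, which is implied by the global equalization assumed in the statement.
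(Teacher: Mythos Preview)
Your proposal is correct and follows essentially the same approach as the paper: both use the equalized-time hypothesis to pass from the individual ratios $\beta_i L_i/R_i = T$ to the aggregated identity $\sum_{i=1}^{k}\beta_i L_i = T\sum_{i=1}^{k} R_i$, and then collapse $\sum_{i=1}^{k} R_i$ via the telescoping structure of the NOMA rates. Your write-up is in fact slightly more explicit than the paper's, which invokes the equal-ratios property and then simply states the closed form for $\sum_{i=1}^{k} R_i$ ``after calculation''; your abbreviation $\Sigma_i$ and the difference-of-logarithms display make the cancellation transparent.
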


\begin{proof}

The offloading time for user $k$ is given by $T_k^{\text{off}} = \frac{\beta_k L_k}{R_k}$, where $\beta_k L_k$ is the offloaded task size, and $R_k$ is the achievable data rate for user $k$, as defined in \eqref{R_m_n_off}:
\begin{equation} \label{eq:R_k}
R_k = B \log_2 \left( \frac{\sum_{i=1}^{k} P_i |v_i(\mathbf{x}_{\mathrm{p}})|^2 + N \sigma^2}{\sum_{j=1}^{k-1} P_j |v_j(\mathbf{x}_{\mathrm{p}})|^2 + N \sigma^2} \right).
\end{equation}
Given the condition $T = T_k^{\text{off}} = T_{k'}^{\text{off}}, \forall k \neq k'$, all users share the same offloading time $T$. Thus, for each user $k$, we have: $T = \frac{\beta_k L_k}{R_k}, \forall k.$ This implies that the transmitted data rates satisfy:
\begin{equation} \label{eq:rate_ratio}
\frac{\beta_1 L_1}{R_1} = \frac{\beta_2 L_2}{R_2} = \cdots = \frac{\beta_K L_K}{R_K} = T.
\end{equation}
Using the property of equal ratios, i.e., $\frac{a_1}{b_1} = \frac{a_2}{b_2} = \cdots = \frac{a_K}{b_K} = \frac{\sum_{i=1}^k a_i}{\sum_{i=1}^k b_i}$, we can rewrite \eqref{eq:rate_ratio} as:
\begin{equation} \label{eq:sum_ratio}
T = \frac{\sum_{i=1}^{k} \beta_i L_i}{\sum_{i=1}^{k} R_i}, \quad \forall k.
\end{equation}

After calculation, the expression yields the sum of the achievable rates for the first $k$ users, given by
\begin{equation}
\sum_{i=1}^{k} R_i = B \log_2 \left( \frac{\sum_{j=1}^{k} P_j |v_j(\mathbf{x}_{\mathrm{p}})|^2 + N \sigma^2}{N \sigma^2} \right).
\end{equation}
Substituting into \eqref{eq:sum_ratio}, we obtain:
\begin{equation}
\tilde{T}_k^{\text{off}} = \frac{\sum_{i=1}^{k} \beta_i L_i}{B \log_2 \left( \frac{\sum_{i=1}^{k} P_i |v_i(\mathbf{x}_{\mathrm{p}})|^2 + N \sigma^2}{N \sigma^2} \right)}, \quad \forall k.
\end{equation}
The proof is completed.
\end{proof}
By Lemma \ref{lem:equivalent_offload_time}, we define an auxiliary variable $D_T$ for the common task delay, reformulating \eqref{eq:Prob_T_min} as:
\begin{subequations}\label{eq:Prob_T_transfer}
\begin{align}
\min_{\bm{\beta}, \mathbf{P}, \mathbf{x}_{\mathrm{p}}, D_T} \quad & D_T \\
\text{s.t.} \quad & \frac{\sum_{i=1}^{k} \beta_i L_i}{B \log_2 \left( \frac{\sum_{i=1}^{k} P_i |v_i(\mathbf{x}_{\mathrm{p}})|^2 + N \sigma^2}{N \sigma^2} \right)} \leq D_T, \quad \forall k, \label{eq:R_alphaT_M} \\
& \frac{(1 - \beta_k) L_k C_k}{f_k^{\text{loc}}} \leq D_T, \quad \forall k, \label{eq:bm_alphaT} \\
& \kappa_k(1- \beta_k) L_kC_k (f_k^{loc})^2+D_T P_k\leq E_{\max}, \ \forall k\label{eq:EE_m}\\
& \eqref{eq:sic_con}-\eqref{eq:delta_con}.\label{eq:eq:delta_con}
\end{align}
\end{subequations}
Here, \eqref{eq:R_alphaT_M} enforces the equalized offloading time constraint, \eqref{eq:bm_alphaT} bounds the local computation time, \eqref{eq:EE_m} limits the total energy consumption for local computation and offloading, and the remaining constraints align with those in \eqref{eq:Prob_T_min}. This transformation converts the original max-min problem into a minimization problem with a single objective $D_T$, facilitating efficient algorithmic solutions.

To address the problem \eqref{eq:Prob_T_transfer}, we propose an efficient algorithm combining an outer-layer binary search with an inner-layer AO. The binary search navigates the feasible range of $D_T$ within $[D_T^{\text{min}}, D_T^{\text{max}}]$, exploiting monotonicity to converge to the minimal $D_T$. For a fixed $D_T$, the problem in \eqref{eq:Prob_T_transfer} becomes a feasibility check for $(\bm{\beta}, \mathbf{P}, \mathbf{x}_{\mathrm{p}})$ satisfying \eqref{eq:R_alphaT_M}--\eqref{eq:eq:delta_con}. The inner layer decomposes the problem into subproblems, iteratively optimized alternately. The optimization objectives of the subproblems differ from the original objective of minimizing $D_T$ in \eqref{eq:Prob_T_transfer}. These objectives are not directly equivalent to minimizing $D_T$ but collectively ensure feasibility within the constraints of \eqref{eq:Prob_T_transfer}. By iteratively refining $\bm{\beta}$, $\mathbf{P}$, and $\mathbf{x}_{\mathrm{p}}$ within the outer binary search framework, the algorithm converges to the minimal $D_T$, aligning with the original optimization goal. The subproblems are defined as follows:
\subsubsection{Optimizing $\bm{\beta}$}
With $\mathbf{P}$ and $\mathbf{x}_{\mathrm{p}}$ fixed, the subproblem optimizes the offloading ratios $\bm{\beta}$ to maximize the sum of offloaded task portions:
\begin{subequations}
\begin{align}
    \max_{\bm{\beta}} \quad & \sum_{k=1}^K \beta_k \label{eq:beta_obj} \\
    \text{s.t.} \quad & \sum_{i=1}^{k} \beta_i L_i \leq D_T B \log_2 \Big( \frac{\sum\limits_{i=1}^{k} P_i |v_i(\mathbf{x}_{\mathrm{p}})|^2 + N \sigma^2}{N \sigma^2} \Big), \forall k,  \label{eq:beta_R} \\
    & \beta_k \geq \max \left\{0,  1 - \frac{D_T f_k^{\text{loc}}}{L_k C_k}, 1 - \frac{E_{\max} - D_T P_k}{\kappa_k L_k C_k (f_k^{\text{loc}})^2} \right\}, \forall k, \label{eq:beta_bounds} \\
    &  \beta_k \leq 1, \quad \forall k. \label{eq:beta_upper}
\end{align}
\end{subequations}
The data rate constraint \eqref{eq:beta_R} is derived from \eqref{eq:R_alphaT_M}, and the bounds \eqref{eq:beta_bounds}--\eqref{eq:beta_upper} stem from \eqref{eq:bm_alphaT}, \eqref{eq:EE_m}, and \eqref{eq:beta_con}. As a linear programming (LP) problem, it is solved efficiently using standard solvers (e.g., interior-point methods). If no feasible $\bm{\beta}$ exists, the current $D_T$ is deemed infeasible.
\subsubsection{Optimizing $\mathbf{P}$}
With $\bm{\beta}$ and $\mathbf{x}_{\mathrm{p}}$ fixed, the subproblem optimizes the transmit powers $\mathbf{P}$ to minimize total power consumption:
\begin{subequations}
\begin{align}
    \min_{\mathbf{P}} \quad & \sum_{k=1}^K P_k \label{eq:P_obj} \\
    \text{s.t.} \quad & \sum_{i=1}^{k} P_i |v_i(\mathbf{x}_{\mathrm{p}})|^2 \geq N \sigma^2 \left( 2^{\frac{\sum_{i=1}^{k} \beta_i L_i}{B D_T}} - 1 \right), \forall k, \label{eq:P_R} \\
    & 0 \leq P_k \leq P_{\max}, \quad \forall k, \label{eq:P_bounds_max} \\
    & P_k \leq \frac{E_{\max} - \kappa_k (1 - \beta_k) L_k C_k (f_k^{\text{loc}})^2}{D_T}, \quad \forall k. \label{eq:P_bounds_en}
\end{align}
\end{subequations}
The data rate constraint \eqref{eq:P_R} is derived from \eqref{eq:R_alphaT_M}, and the power bounds \eqref{eq:P_bounds_max}--\eqref{eq:P_bounds_en} are derived from \eqref{eq:EE_m} and \eqref{eq:p_con}. This LP problem is solvable efficiently, and an infeasible $\mathbf{P}$ indicates that the current $D_T$ is not achievable.
\subsubsection{Optimizing $\mathbf{x}_{\mathrm{p}}$}
With $\bm{\beta}$ and $\mathbf{P}$ fixed, the subproblem optimizes the PA positions $\mathbf{x}_{\mathrm{p}}$ to maximize the weighted sum of effective channel gains:
\begin{subequations}
\begin{align}
    \max_{\mathbf{x}_{\mathrm{p}}} \quad & \sum_{k=1}^K P_k |v_k(\mathbf{x}_{\mathrm{p}})|^2 \label{eq:x_p_obj} \\
    \text{s.t.} \quad & \eqref{eq:sic_con}, \quad \forall k, \label{eq:sic_con_sub} \\
    & \sum_{i=1}^{k} P_i |v_i(\mathbf{x}_{\mathrm{p}})|^2 \geq N \sigma^2 \left( 2^{\frac{\sum_{i=1}^{k} \beta_i L_i}{B D_T}} - 1 \right),  \label{eq:x_p_R} \\
    & 0 \leq x_{\mathrm{p}}^n \leq L, \quad \forall n, \label{eq:x_p_bounds} \\
    & x_{\mathrm{p}}^{n+1} - x_{\mathrm{p}}^n \geq \Delta, \quad \forall n = 1, \ldots, N-1. \label{eq:x_p_delta}
\end{align}
\end{subequations}
Due to the non-convexity of $|v_k(\mathbf{x}_{\mathrm{p}})|^2$, we employ element-wise optimization, optimizing each $x_{\mathrm{p}}^n$ while fixing $x_{\mathrm{p}}^m$ for $m \neq n$. For each $n$, the subproblem is:
\begin{subequations}
\begin{align}
    \max_{x_{\mathrm{p}}^n} \quad & \sum_{k=1}^K P_k \left| \sum_{m=1}^N \frac{\eta e^{-j \left[ \frac{2 \pi}{\lambda} \sqrt{(x_{\mathrm{u}}^k - x_{\mathrm{p}}^m)^2 + D_k^2} + \frac{2 \pi}{\lambda_g} x_{\mathrm{p}}^m \right]}}{\sqrt{(x_{\mathrm{u}}^k - x_{\mathrm{p}}^m)^2 + D_k^2}} \right|^2 \label{eq:x_p_n_obj} \\
    \text{s.t.} \quad & x_{\mathrm{p}}^{n-1} + \Delta \leq x_{\mathrm{p}}^n \leq x_{\mathrm{p}}^{n+1} - \Delta, \quad 0 \leq x_{\mathrm{p}}^n \leq L. \label{eq:x_p_n_bounds}
\end{align}
\end{subequations}
{This is a one-dimensional non-convex optimization problem, solved using a multi-resolution search strategy to balance optimality and efficiency. Specifically, we first perform a coarse search to locate the potential region of the global maximum, followed by a fine-grained refinement within the identified region, effectively avoiding the high computational cost of a full high-precision grid search\cite{wang2025modeling}}. After optimizing all $x_{\mathrm{p}}^n$, the solution is validated against \eqref{eq:sic_con_sub} and \eqref{eq:x_p_R}. If infeasible, $D_T$ is marked as infeasible.
\begin{algorithm}[!htb]
    \small
\caption{Task Completion Time Minimization}\label{alg:optimization}
\begin{algorithmic}[1]
\State Initialize \( D_T^{\text{min}} \), \( D_T^{\text{max}} \), \( \epsilon = 10^{-4} \), \( I_{\text{max}} = 20 \), \( \epsilon_x = 10^{-4} \).
\While{\((D_T^{\text{max}} - D_T^{\text{min}}) / D_T^{\text{max}} > \epsilon\)}
    \State \( D_T \gets (D_T^{\text{min}} + D_T^{\text{max}})/2 \), Initialize: $\bm{\beta}^{(0)}, \mathbf{P}^{(0)}, \mathbf{x}_{\mathrm{p}}^{(0)}$
    \For{\( i = 1 \) to \( I_{\text{max}} \)}
        \State Solve \eqref{eq:beta_obj}--\eqref{eq:beta_upper} for \(\bm{\beta}\) (LP)
        \If{infeasible}
            \State Break
        \EndIf
        \State Solve \eqref{eq:P_obj}--\eqref{eq:P_bounds_en} for $\mathbf{P}$ (LP)
        \If{infeasible}
            \State Break
        \EndIf
        \For{\( n = 1 \) to \( N \)}
            \State Solve \eqref{eq:x_p_n_obj}--\eqref{eq:x_p_n_bounds} for \( x_{\mathrm{p}}^n \) {(multi-resolution search)}
        \EndFor
        \If{ \eqref{eq:sic_con_sub} and \eqref{eq:x_p_R} violated}
            \State Break
        \EndIf
        \If{\(\|\bm{\beta} - \bm{\beta}_{\text{old}}\|_2 / \|\bm{\beta}_{\text{old}}\|_2 < \epsilon\) and \(\|\mathbf{P} - \mathbf{P}_{\text{old}}\|_2 / \|\mathbf{P}_{\text{old}}\|_2 < \epsilon\) and \(\|\mathbf{x}_{\mathrm{p}} - \mathbf{x}_{\mathrm{p},\text{old}}\|_2 / \|\mathbf{x}_{\mathrm{p},\text{old}}\|_2 < \epsilon\)}
            \State Break
        \EndIf
    \EndFor
    \If{all constraints \eqref{eq:R_alphaT_M}--\eqref{eq:eq:delta_con} satisfied}
        \State Store \((\bm{\beta}, \mathbf{P}, \mathbf{x}_{\mathrm{p}})\), update \( D_T^{\text{max}} \gets D_T \)
    \Else
        \State Update \( D_T^{\text{min}} \gets D_T \)
    \EndIf
\EndWhile
\State \Return \( D_T^{\text{min}}, \bm{\beta}, \mathbf{P}, \mathbf{x}_{\mathrm{p}} \)
\end{algorithmic}
\end{algorithm}

The algorithm, outlined in Algorithm \ref{alg:optimization}, ensures convergence via an outer-layer binary search and inner-layer AO algorithm. The algorithm’s complexity comprises an outer binary search with $O(\log((D_T^{\text{max}} - D_T^{\text{min}})/\epsilon))$ iterations and inner AO algorithm. Each iteration optimizes $\bm{\beta}$ and $\mathbf{P}$ via LPs with $O(K^3)$ complexity for $K$ variables and constraints, and $\mathbf{x}_{\mathrm{p}}$ via $N$ one-dimensional searches. {By adopting the multi-resolution search, the complexity for updating PAs is reduced to $O(N (S_c + S_f))$, where $S_c$ and $S_f$ denote the number of points in the coarse and fine search stages, respectively. With $I_{\text{max}}$ iterations per feasibility check, total complexity is $O(\log((D_T^{\text{max}} - D_T^{\text{min}})/\epsilon) \cdot I_{\text{max}} (K^3 + N (S_c + S_f)))$.}

The optimization problem \eqref{eq:Prob_T_min} is solved for a fixed decoding order \(\pi\). To obtain the global optimum, we evaluate all decoding orders \(\pi \in \Pi\), where \(\Pi\) is the set of all permutations of user indices \(\{1, 2, \ldots, K\}\). For each \(\pi\), Algorithm \ref{alg:optimization} computes the minimal task delay \(D_T\) and the corresponding \(\bm{\beta}\), \(\mathbf{P}\), and \(\mathbf{x}_{\mathrm{p}}\). The global optimal solution is determined by selecting the configuration with the smallest \(D_T\)\footnote{{The computational complexity of the proposed solution is primarily governed by the $O(K!)$ exhaustive search over decoding orders. While the execution time grows factorially with $K$, the algorithm remains scalable for practical NOMA-MEC systems where user clusters are typically small.}}. 
\vspace{-0.4cm} 
\section{Simulation Results}
\vspace{-0.2cm} 
\label{sec:simulation_results}
Simulations are conducted in a \SI{15}{\meter} square region with $K=2$ randomly placed users. {We focus on the $K=2$ scenario, representing a practical NOMA pair often used to limit SIC complexity, although the proposed algorithm is applicable to the general multi-user case.} The BS uses a dielectric waveguide of length $L$, having $N=4$ PAs with minimum separation $\Delta = \lambda/2$, positioned at height $d = \SI{3}{\meter}$ with feed point at $[0, 0, d]$. Key parameters include: carrier frequency $f_c = \SI{28}{\giga\hertz}$, refractive index $n_{\text{eff}} = 1.4$, bandwidth $B = \SI{1}{\mega\hertz}$, noise power spectral density $\SI{-174}{\deci\bel m/\hertz}$, task size $L_k = \SI{1}{\mega\bit}$, local computation $C_k = 10^3$ cycles/bit, local CPU frequency $f_k^{\text{loc}} = 10^9$ cycles/s, capacitance coefficient $\kappa_k = 10^{-27}$, maximum power $P_{\max} = \SI{10}{\deci\bel m}$, energy budget $E_{\max} = \SI{0.2}{\joule}$, and algorithm settings $\epsilon = \epsilon_x = 10^{-4}$, $I_{\text{max}} = 20$. Performance is compared against: {(1) a conventional MIMO system representing a real antenna array, where an $N$-element uniform linear array is fixed at the location $[0, 0, d]$ with $\lambda/2$ spacing, utilizing a single active RF chain for analog beamforming;} (2) an FDMA-based MEC PASS with equal bandwidth allocation; {(3) a TDMA-based MEC PASS with equal time allocation, where users offload tasks sequentially in dedicated time slots using the same waveguide setup. Note that to ensure a fair comparison regarding total energy consumption, the transmit power in the TDMA scheme is scaled by a factor of $K$ during the active time slots.}

\vspace{-0.3cm} 
\begin{figure}[!htb]
    \centering
    \includegraphics[width=0.6\columnwidth]{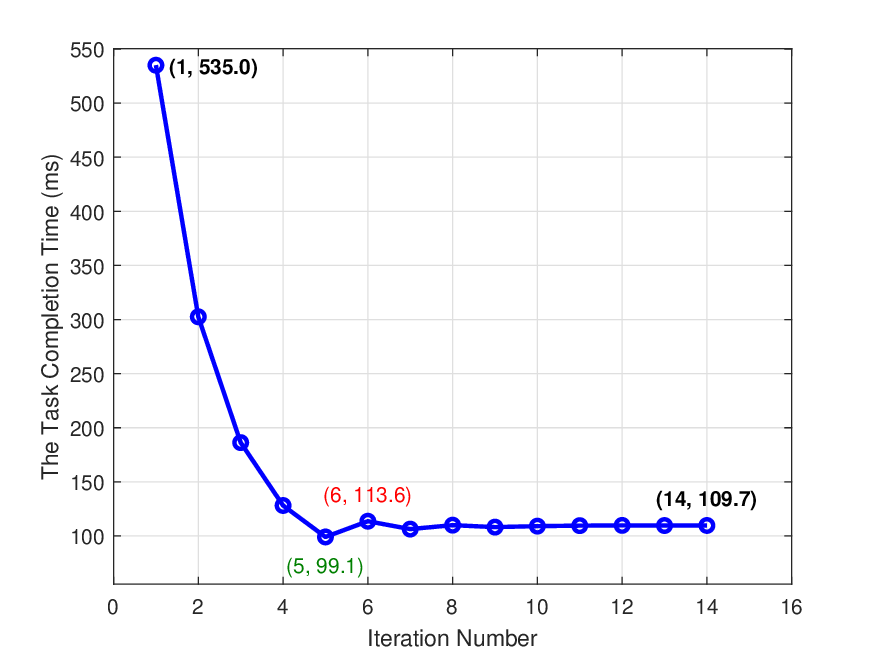}
    \caption{Convergence of the task completion time $D_T$.}
    \label{fig:convergence}
\end{figure}
\vspace{-0.3cm} 
Fig.~\ref{fig:convergence} shows the convergence of Algorithm~\ref{alg:optimization} for $N=4$ PAs, plotting $D_T$ against outer-layer binary search iterations. $D_T$ decreases from \SI{535.0}{\milli\second} to \SI{109.7}{\milli\second} within 14 iterations. A temporary increase in $D_T$ from \SI{99.1}{\milli\second} at iteration 5 to \SI{113.6}{\milli\second} at iteration 6 occurs because the candidate $D_T = \SI{99.1}{\milli\second}$ is infeasible, failing to satisfy constraints in \eqref{eq:Prob_T_transfer}. {The convergence rate is primarily governed by the binary search precision and the system scale. Specifically, the number of outer iterations is strictly determined by the initial interval $[D_T^{\min}, D_T^{\max}]$ and the tolerance $\epsilon$. Regarding the inner loop, increasing $K$ raises the dimensionality of the LPs, while a larger $N$ expands the solution space for position updates, potentially requiring more AO iterations to satisfy the feasibility constraints.}

\vspace{-0.3cm} 
\begin{figure}[!htb]
    \centering
    \includegraphics[width=0.6\columnwidth]{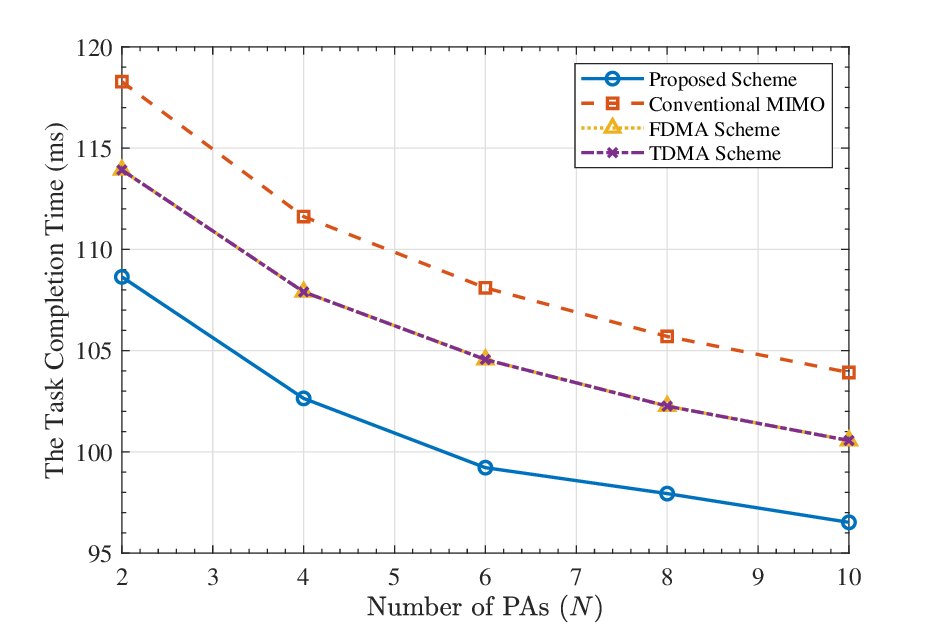}
    \caption{Task completion time $D_T$ versus number of PAs.}
    \label{fig:delay_vs_N}
\end{figure}

\vspace{-0.3cm} 
{Fig.~\ref{fig:delay_vs_N} compares $D_T$ for the proposed NOMA-MEC PASS, conventional MIMO, and the OMA baselines (TDMA and FDMA) as PAs vary from 2 to 10, averaged over 30 random user placements. The proposed scheme achieves the lowest $D_T$, outperforming MIMO and OMA schemes due to reduced path loss. It is observed that the TDMA and FDMA schemes exhibit similar performance due to the energy-fair setup.} Increasing $N$ further reduces $D_T$ by enhancing PA spatial distribution. {Nevertheless, as proved in \cite{ouyang2025array}, the array gain of PASS is non-monotonic with $N$, where an excessively large number of PAs can lead to performance degradation due to power dilution and mutual coupling effects.}

\vspace{-0.3cm} 
\begin{figure}[!htb]
    \centering
    \includegraphics[width=0.6\columnwidth]{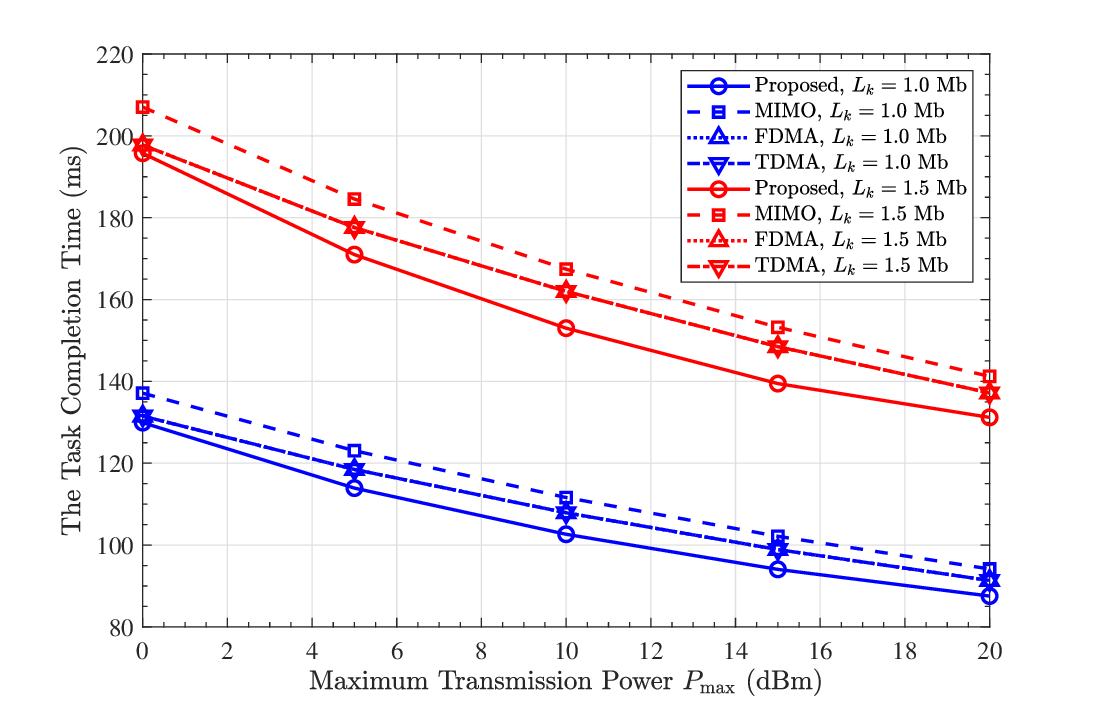}
    \caption{Task completion time $D_T$ versus maximum transmit power $P_{\max}$ for task sizes $L_k = \SI{1}{\mega\bit}$ and $L_k = \SI{1.5}{\mega\bit}$.}
    \label{fig:delay_vs_Pmax}
\end{figure}

\vspace{-0.3cm} 
Fig.~\ref{fig:delay_vs_Pmax} plots $D_T$ versus maximum transmit power $P_{\max}$ (\SI{0}{\deci\bel m} to \SI{20}{\deci\bel m}) for task sizes $L_k = \SI{1}{\mega\bit}$ and \SI{1.5}{\mega\bit}, averaged over 30 random user placements. {The proposed scheme consistently yields lower $D_T$ than MIMO and the OMA schemes (TDMA/FDMA), with advantages amplified for larger $L_k$. Although the OMA baselines benefit from orthogonal resource allocation, they are limited by the static division of resources.}

\vspace{-0.4cm} 
\section{Conclusions}
\vspace{-0.2cm} 
This letter presented a novel uplink PASS enabled NOMA-MEC framework. We formulated a joint optimization problem to minimize the maximum task delay by optimizing offloading ratios, transmit powers, and PA positions, subject to practical constraints. To tackle the problem, we developed a bisection search-based AO algorithm, iteratively solving each subproblem for a given task delay. Extensive simulations demonstrated that the proposed framework significantly outperformed several baselines, achieving substantial reductions in task delay.

\bibliographystyle{IEEEtran}
\bibliography{IEEEabrv,egbib}

\end{document}